\pgfplotsset{compat=newest}
\newcommand\footnoteref[1]{\protected@xdef\@thefnmark{\ref{#1}}\@footnotemark}
\newtheorem{theorem}{Theorem}
\newtheorem{remark}[theorem]{Remark}
\newenvironment{mymatrix}{\begin{bmatrix}} {\end{bmatrix} }
\def\ve#1{{\mathchoice{\mbox{\boldmath$\displaystyle #1$}}%
              {\mbox{\boldmath$\textstyle #1$}}%
              {\mbox{\boldmath$\scriptstyle #1$}}%
              {\mbox{\boldmath$\scriptscriptstyle #1$}}}}
\renewcommand{\vec}[1]{\ensuremath{\boldsymbol{#1}}}
\newcommand{\Fq}{\ensuremath{\mathbb{F}_q}}
\newcommand{\Fqm}{\ensuremath{\mathbb{F}_{q^m}}}
\newcommand{\mycode}[1]{\ensuremath{\mathcal{#1}}}
\newcommand{\y}{\ve{y}}
\newcommand{\e}{\ve{e}}
\newcommand{\m}{\ve{m}}
\renewcommand{\c}{\ve{c}}
\renewcommand{\e}{\ve{e}}
\renewcommand{\m}{\ve{m}}
\newcommand{\G}{\ve{G}}
\newcommand{\F}{\mathbb{F}}
\newtheorem{cor}{Corollary}
\newtheorem{const}[cor]{Construction}
\newtheorem{thmmystyle}{Theorem}
\newcommand{\removelatexerror}{\let\@latex@error\@gobble}
\newcommand*{\rom}[1]{\expandafter\@slowromancap\romannumeral #1@}
\begin{document}

\title{Error Correction for \\ Partially Stuck Memory Cells}
\author{\IEEEauthorblockN{Haider Al Kim$^{1,2}$\thanks{This work has received funding from the German Research Foundation (Deutsche Forschungsgemeinschaft, DFG) under Grant No. WA3907/1-1 and from the German Israeli Project Cooperation (DIP) grant no.~KR3517/9-1. Al Kim has received funding from the German Academic Exchange Service (Deutscher Akademischer Austauschdienst, DAAD) under the support program ID 57381412.}, Sven Puchinger$^{1}$, Antonia Wachter-Zeh$^{1}$}

\IEEEauthorblockA{
  $^1$Institute for Communications Engineering, Technical University of Munich (TUM), Germany\\ $^2$Electrical and Communication Engineering, University of Kufa, Iraq\\
  Email: \{haider.alkim, sven.puchinger, antonia.wachter-zeh\}@tum.de}

  }

\maketitle

\begin{abstract}
	We present code constructions for masking $u$ partially stuck memory cells with $q$ levels and correcting additional random errors. The results are achieved by combining the methods for masking and error correction for stuck cells in \cite{heegard1983partitioned} with the masking-only results for partially stuck cells in \cite{wachterzeh2016codes}. We present two constructions for masking $u<q$ cells and error correction: one is general and based on a generator matrix of a specific form. The second construction uses cyclic codes and allows to efficiently bound the error-correction capability using the BCH bound. Furthermore, we extend the results to masking $u\geq q$ cells. For $u>1$ and $q>2$, all new constructions require less redundancy for masking partially stuck cells than previous work on stuck cells, which in turn can result in higher code rates at the same masking and error correction capability.
  
\end{abstract}

\begin{IEEEkeywords}
flash memories, phase change memories, (partially) stuck cells, error correction, defective cells, partitioned cyclic codes, BCH code. 
\end{IEEEkeywords}

\section{Introduction}
The dominance of non-volatile memories such as PCMs (phase change memories) as memory solutions for a variety of applications has become significant due to their advantages as permanent storage devices. The advantages of these memories are their rapid increase in capacity plus their ability as multi-levels technologies. For these reasons, their cost has been reduced strongly in the last years. However, reliability issues make it necessary to suggest new sophisticated coding and signal processing solutions. PCM cells can hold two states: an amorphous state and several crystalline states. 
Non-defective memory cells can switch between their main states (amorphous and crystalline). However, due to the cooling and heating processes of the cells, PCMs may face failures in changing their states, and therefore the cells can hold only one phase and they become \emph{stuck} \cite{gleixner2009reliability,kim2005reliability,lee2009study,pirovano2004reliability}.

This means that the cell’s charge is trapped in the cell, and it cannot change its status to be re-written. 
To deal with the stuck positions, a mechanism called \emph{masking} is used. Masking finds a codeword that holds the same levels as in the stuck positions, and can therefore be placed properly on the memory.
For multi-level PCMs, since the crystalline state can be programmed into partial states, the cell may be stuck in this level or in one of its sub-levels (level higher than $0$).
If the cell can only represent levels greater or equal to a reference level $s>0$, it is called a \emph{partially stuck cell} \cite{wachterzeh2016codes}. For multi-level PCMs, the case $s=1$ is particularly important since this means that a cell cannot reach the amorphous state anymore, but all partially crystalline ones, cf.~\cite{wachterzeh2016codes}.
Similarly, (partially) stuck cells can occur in flash memories. Flash memory stores information by charging it electronically. If charge is trapped inside one cell at a certain level (one of the main levels or any intermediate levels), the ways to rewrite on again are by increasing the trapped level or by erasing one whole block. However, erasing whole block reduces the lifetime of these memory devices. 
Figure~1 shows the general idea of the (partially) stuck memory cells.
On the other hand, it happens that due to manufacturing defects, cells can only hold lower levels causing the reverse problem as partially stuck cells.

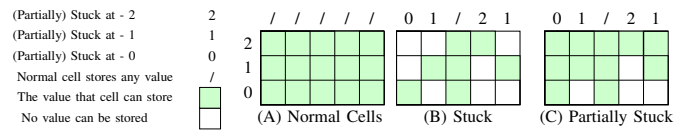
\begin{figure} [h]
	\scalebox{0.55}{	
		\begin{tikzpicture}
		\draw
		
		(3.25,0) node[anchor=north] {\small {No value can be stored}} 
		
		;
		
		\draw (6,- 0.5) rectangle (6.5,0);
		\draw(3.5,0.5) node[anchor=north] {\small {The value that cell can store}} 
		;

		\fill[green!20!white] (6,0) rectangle (6.5,0.5);
		\draw (6,0) rectangle (6.5,0.5);
		\draw (6.3,1) node[anchor=north] {/}
		
		(3.5,1) node[anchor=north] {\small {Normal cell stores any value}}
		;

		\draw (6.3,1.5) node[anchor=north] {0}
		
		(3,1.5) node[anchor=north] {\small{(Partially) Stuck at - 0}}
		;
		\draw (6.3,2) node[anchor=north] {1}
		
		(3,2) node[anchor=north] {\small{(Partially) Stuck at - 1}}
		;
		\draw (6.3,2.5) node[anchor=north] {2}
		(3,2.5) node[anchor=north] {\small{(Partially) Stuck at - 2}}
		;
		\end{tikzpicture}
	}
	\scalebox{0.65}{	
		
		\begin{tikzpicture}[baseline={(0,-0.5)}]
		
		\draw
		(-0.25,1.5) node[anchor=north] {2}
		(-0.25,1) node[anchor=north] {1}
		(-0.25,0.5) node[anchor=north] {0}
		(0.25,2) node[anchor=north] {/}
		(0.75,2) node[anchor=north] {/}
		(1.25,2) node[anchor=north] {/}
		(1.75,2) node[anchor=north] {/}
		(2.25,2) node[anchor=north] {/}
		
		;
		\draw[thick,-] (0,1.5) -- (2.5,1.5);
		
		\draw (0,0) rectangle (2.5,1.5);
		
		\fill[green!20!white] (0,0) rectangle (2.5,1.5);
		\draw (0,0) rectangle (2.5,0.5);
		\draw (0,0) rectangle (2.5,1);
		\draw (0,0) rectangle (0.5,1.5);
		\draw (0,0) rectangle (1,1.5);
		\draw (0,0) rectangle (1.5,1.5);
		\draw (0,0) rectangle (2,1.5);
		\draw[thick,-] (2,1.5) -- (2.5,1.5);
		\draw[thick,-] (2.5,1) -- (2.5,1.5);
		
		\draw
		(3,2) node[anchor=north] {0}
		(3.5,2) node[anchor=north] {1}
		(4,2) node[anchor=north] {/}
		(4.5,2) node[anchor=north] {2}
		(5,2) node[anchor=north] {1};
		\draw (2.75,0) rectangle (5.25,1.5);
		\fill[green!20!white] (2.75,0) rectangle (3.25,0.5);
		\fill[green!20!white] (3.25,0.5) rectangle (3.75,1);
		\fill[green!20!white] (3.75,0) rectangle (4.25,1.5);
		\fill[green!20!white] (4.25,1) rectangle (4.75,1.5);
		\fill[green!20!white] (4.75,0.5) rectangle (5.25,1);
		\draw (2.75,0) rectangle (5.25,0.5);
		\draw (2.75,0) rectangle (5.25,1);
		\draw (2.75,0) rectangle (3.25,1.5);
		\draw (2.75,0) rectangle (3.75,1.5);
		\draw (2.75,0) rectangle (4.25,1.5);
		\draw (2.75,0) rectangle (4.75,1.5);
		;
	
		\draw
		
		(6,2) node[anchor=north] {0}
		(6.5,2) node[anchor=north] {1}
		(7,2) node[anchor=north] {/}
		(7.5,2) node[anchor=north] {2}
		(8,2) node[anchor=north] {1};
		\draw[thick,-] (6,1.5) -- (6,1.5);
		\draw (5.75,0) rectangle (7.75,1.5);
		\fill[green!20!white] (5.75,0) rectangle (6.25,0.5);
		\fill[green!20!white] (6.25,0.5) rectangle (6.75,1);
		\fill[green!20!white] (6.75,0) rectangle (7.25,1.5);
		\fill[green!20!white] (5.75,1) rectangle (8.25,1.5);
		\fill[green!20!white] (5.75,0.5) rectangle (6.25,1);
		\draw[thick,-] (7.75,1) -- (7.75,1.5);
		\fill[green!20!white] (7.75,0.5) rectangle (8.25,1);
		\draw (5.75,0) rectangle (8.25,0.5);
		\draw (5.75,0) rectangle (8.25,1);
		\draw (5.75,0) rectangle (6.25,1.5);
		\draw (5.75,0) rectangle (6.75,1.5);
		\draw (5.75,0) rectangle (7.25,1.5);
		\draw (5.75,0) rectangle (8.25,1.5);
		\draw[thick,-] (7.75,1.5) -- (8.25,1.5);
		\draw[thick,-] (7.75,.5) -- (7.75,1);
		
		\draw
		(1.2,0) node[anchor=north] {(A) Normal Cells} ;
		\draw (4,0) node[anchor=north] {(B) Stuck} 
		(7,0) node[anchor=north] {(C) Partially Stuck};
		\end{tikzpicture}
	}
	
	\caption{General idea of (partially) stuck memory cells. In this figure, there are $n=5$ cells with $q=3$ possible levels. The stuck levels are (0, 1 or 2). In case (A), normal cells can store any of the three values. In the stuck scenario as shown in case (B), the stuck cell can store only the exact stuck level $s$. It is more flexible in case (C) (partially stuck scenario). A cell that is partially stuck at 0 can store any value as a non-defective cell. Partially stuck cells at level $s \geq 1$ can store one level or more. This paper only deals with partially stuck at 1 cells.}	

\end{figure}
\vspace{-0.3cm}
\subsection{Related Work}

In \cite{heegard1983partitioned}, code constructions for masking \emph{stuck memory cells} were proposed. In addition to masking the defect cells, it is possible to correct errors that occur during the storing and reading processes. A generator matrix of a specific form was constructed for this purpose. Moreover, in \cite{heegard1983partitioned}, a partitioned cyclic code and partitioned BCH were proposed to mask stuck cells and correct errors. However, the required redundancy for masking more than one cell is greater than one symbol. Later, an asymptotic optimal analysis for $n \rightarrow \infty$ was proposed to mask defects of (fixed\footnote{An asymptotically optimal class of codes was proposed to correct $\rho$ defects ($\rho = \text{const value}$) regardless of $n \rightarrow \infty$, where $n$ is the code length.} \cite{dumer1989asymptotically}, linearly increasing\footnote{Asymptotically optimal linear codes were proposed to correct $\rho$ defects $(\rho = \alpha n)\text{, where } \alpha = \text{ constant value}$ and $\rho =$ number of defects that is linearly increasing while $n \rightarrow \infty$, i.e. if $n$ is doubled then $\rho$ is doubled.} \cite{dumer1990asymptotically}) multiplicity that need a redundancy of at least the number of defects. Besides masking the defects, the construction in \cite [Section 5]{dumer1990asymptotically} can correct errors with overall redundancy $r(n,\rho,\tau) = \rho + o(n)$, where $\rho$ denotes the number of defects and $\tau$ denotes the number of errors. $o(n)$ is the required redundancy to correct errors. 

In \cite{wachterzeh2016codes}, improvements on the redundancy necessary for masking \emph{partially stuck memory cells} are achieved, and lower and upper bounds are derived by the constructions.
However, the paper does not consider error correction in addition to masking.

\subsection{Our Contribution}

In this paper, we combine the methods of \cite{heegard1983partitioned} and \cite{wachterzeh2016codes}. We obtain code constructions for combined error correction and masking partially stuck cells.
Compared to the stuck-cell case in \cite{heegard1983partitioned}, we reduce the redundancy necessary for masking, similar to the results in \cite{wachterzeh2016codes}.\\
In contrast to \cite{wachterzeh2016codes}, however, we are able to correct additional random errors.
Similar to the main part of \cite{wachterzeh2016codes}, this paper deals with partially-stuck-at-1 cells, i.e., $s=1$ (recall that this is the typical for PCMs, cf.~\cite{wachterzeh2016codes}), but the results are extendable to arbitrary $s$ similar to \cite[Section~VII]{wachterzeh2016codes}.\\
On the other hand, compared to the stuck cell model in \cite [Section 5]{dumer1990asymptotically}, it is not clear if $r(n,\rho,\tau) = \rho + o(n)$ is higher or lower than our overall redundancy shown in Theorem~\ref{thm4}. This is because we do not consider an asymptotic analysis in this paper for the \emph{partially stuck} cells with errors correction model. However, under the assumption that both of them
require $o(n)$ redundancy to correct errors, our overall redundancy shown in Theorem~\ref{thm4} is lower. The reason is that \cite [Section 5]{dumer1990asymptotically} uses $\rho$ check symbols to mask the \emph{stuck} cells while our construction uses less than $\rho$ as a redundancy necessary to mask \emph{partially stuck} cells. Moreover, our constructions can correct a certain number of errors and mask a certain number of \emph{partially stuck} cells, while \cite[Section 5]{dumer1990asymptotically} proposed $o(n)$ redundancy for error correction that is negligible, i.e $o(n) \rightarrow 0 \iff n \rightarrow \infty$.

\section{Preliminaries}

\subsection{Notations }\label{ssec:notation}
For a prime power $q$, let $\mathbb{F}_q$ denote the finite field of order $q$ and ${\mathbb{F}}_q[x]$ be the set of all univariate polynomial with coefficients in $\mathbb{F}_q$. Write $[f] = \{0,1, \dots, f-1\}$. Let $k_1$ be the number of information symbols, $l$ be the required symbol(s) for masking, $r$ be the required redundancy for error correction, $t$ be the number of errors, $u$ be the number of (partially) stuck cells.
Let $s_{\phi_i}$ denote the (partially) stuck level at any position, where $i \in [u]$. Let $n$ be the code length and also the memory size.  
\subsection{Definitions}
\subsubsection{Stuck and Partially Stuck Cells}
A cell is called \emph{stuck at level $s$}, where $ s \in [q]$, if it cannot change its value and always stores the value $s$. 
A cell is called \emph{partially stuck at level $s$}, where $s \in [q]$, if it can only store values which are at least $s$. 
If a cell is partially stuck at 0, it is a non-defect cell which can store any of the $q$ levels  \cite {wachterzeh2016codes}.
\subsubsection{($u$, $t$)-PSMC}
An $(n,M)_q$  ($u$, $t$)-\emph{partially-stuck-at-masking code} $\mycode{C}$ is a coding scheme consisting of an encoder $\mathcal{E}$ and decoder $\mathcal{D}$.
The input of the encoder $\mathcal{E}$ is
\begin{itemize}
\item the set of locations of $u$ partially stuck cells $\ve{\phi}= \{\phi_0,\phi_1,\phi_2, \cdots,\phi_{u-1}\} \subseteq [n]$, 
\item the partially stuck levels $s_{\phi_0},s_{\phi_2}, \cdots,s_{\phi_{u-1}}  \in [q]$, and
\item a message $\m \in \mathcal{M}$, where $\mathcal{M}$ is a message space of cardinality $|\mathcal{M}| = M$
\end{itemize}
It outputs a vector $\c \in \Fq^n$ which fulfills ${c}_{\phi_i} \geq s_{\phi_i}$ for all $i=1,\dots,u$.
The decoder is a mapping that takes $\c+\e \in [q]^n$ as input and returns the correct message $\m$ for all error vectors $\e$ of Hamming weight at most $t$.
\subsubsection{$Q$-ary Cyclic Code} a $q$-ary cyclic code of length $n$, dimension $k$, and minimum distance $d$ is denoted as an $[n,k,d]_q$ code $\mathcal{C}$. It has a generator polynomial $g(x)$ of degree $n-k$ with roots in $\mathbb{F}_{q^m}$, where $n$ divides $q^m - 1$.

A cyclotomic coset $M_a$ is given by:
\begin{equation}
M_a := \left\{a \cdot q^j \mod n ,\text{ } \forall j = 0,1,\cdots, n_a-1 \right\},
\end{equation}
where $n_a$ is the smallest integer such that $a\cdot q^{n_a} \equiv a \mod n $.
Let $\alpha \in \Fqm$ be a primitive $n^{th}$ root of unity in $\mathbb{F}_{q^m}$.
The minimal polynomial of an element $\alpha^a$ is given by:
\begin{equation}
M^{(a)} (x) :=  \prod_{b \in M_a} (x-\alpha^b)
\end{equation}
Although the factors $(x-\alpha^b)$ are in $\Fqm[x]$, minimal polynomials have coefficients in the small field $\Fq$, i.e., $M^{(a)} (x) \in \mathbb{F}_q[x]$. 
The defining set $D_c$ of a $q$-ary cyclic code $\mathcal{C}$ with parameters $[n,k,d_1]_q$ is the set containing the indices $b$ of the zeros $\alpha^b$ of the generator polynomial $g(x)$. If $a \in D_c$, then we have $M_s \subseteq D_c$, and hence, $D_c$ is a union of cyclotomic cosets $M_{a_1},\dots,M_{a_w}$ for some $w$, i.e.
\begin{equation}
D_c := \{ b: g(\alpha^b)= 0\} = M_{a_1} \cup M_{a_2} \cup M_{a_3} \cdots \cup M_{a_w}.
\end{equation} 
The generator polynomial $g(x) \in \mathbb{F}_q[x]$ of the code $[n,k,d_1]$ of degree $r= n-k$ is thus given by
\begin{equation}
g(x) = \prod_{a \in D_c} (x-\alpha^a) = \prod_{b =1}^{w} M^{(a_b)} (x).
\end{equation} 
For any cyclic code, there is a parity-check polynomial:
\begin{equation}
h(x) = \dfrac{(x^n -1)}{g(x)} = \prod_{a \in [n] \setminus D_c} (x-\alpha^a).
\end{equation}
The minimum distance of a cyclic code is at least its BCH bound, which is the number of consecutive elements in $D_c$ plus one.

\section{Codes for (Partially) Stuck Cells }   

\subsection{Masking Partially Stuck and Correcting Additional Random Errors using a Generator Matrix Construction}

The goal of this paper is to find a code construction that can store information in a memory with some partially stuck cells and additionally can correct errors. For the masking process and according to \cite{wachterzeh2016codes}, we need only a single redundancy symbol if $u<q$ and ($s_{\phi_i} =1$). However, the work in \cite{wachterzeh2016codes} does not consider correcting additional random errors. The following theorem introduces a code construction using a generator matrix with a specific form that masks partially stuck cells and corrects errors.

\begin{thmmystyle}\label{thm:matrix_construction:u<q}
	Let $u\leq \min \{n,q-1\}$. Assume there is an $[n, k, d \geq 2t+1]_q$ code $\mycode{C}$ with  a $k \times n$ generator matrix of the following form:
	
	\begin{center}			
	
	$ \ve{G} = \begin{bmatrix} \ve{G}_1 \\ \ve{G}_0  \end{bmatrix} = \begin{bmatrix} \ve{0}_{k_1 \times 1} & \ve{I}_{k_1} & \ve{P}_{k_1\times r} \\  {1} & {\ve{1}_{1 \times k_1}} & {\ve{1}_{1 \times r}} \end{bmatrix}$,
	\end{center}
	where: 
	\begin{itemize}
		\item $\ve{G}_1$ is a $k_1 \times n$ generator matrix of an $[n, k_1]_q$ code $\mycode{C}_1$. 
		\item $k_1 = n - 1 - r$.
		\item $k = k_1 + 1$.
		\item $\ve{P} \in \mathbb{F}^{k_1\times r}_q $.
	\end{itemize}
	
	By using Algorithm~\ref{tab:a1} and ~\ref{tab:a2}, this code is a ($u$,$t$)-PSMC which can mask $u$ partially stuck memory cells, and can correct $t$ additional random errors while storing $k_1 = n-r-1$ information symbols. 	

     \label{thm1}
\end{thmmystyle}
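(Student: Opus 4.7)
The plan is to combine the single-symbol masking technique of \cite{wachterzeh2016codes} with any bounded-distance decoder for the underlying code $\mycode{C}$, using the particular form of $\ve{G}$ as the glue between the two. The bottom row $\ve{G}_0$ is the all-ones vector, so adding $\lambda \ve{G}_0$ to any codeword shifts every coordinate by the common scalar $\lambda \in \Fq$; this is exactly one degree of freedom, which I would spend on masking, leaving $k_1 = n - r - 1$ symbols for the message.

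\textbf{Masking (encoder).} Given a message $\m \in \Fq^{k_1}$ and partially-stuck positions $\ve{\phi} = \{\phi_0, \ldots, \phi_{u-1}\}$, I would consider the parametrised candidate $\c(\lambda) = \m\ve{G}_1 + \lambda\ve{G}_0$. Since $\ve{G}_0$ has only $1$-entries, $c_{\phi_i}(\lambda) = (\m\ve{G}_1)_{\phi_i} + \lambda$, and the partially-stuck-at-$1$ constraint $c_{\phi_i} \geq 1$ is equivalent to $c_{\phi_i} \neq 0$. This forbids at most the $u$ values $\lambda = -(\m\ve{G}_1)_{\phi_i}$; since $u \leq q-1 < q$, at least one feasible $\lambda \in \Fq$ remains, and Algorithm~\ref{tab:a1} finds one by scanning $\Fq$. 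The resulting vector lies in $\mycode{C}$ and satisfies the partially-stuck constraints.

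\textbf{Error correction and message recovery (decoder).} Given the channel output $\y = \c + \e$ with $\mathrm{wt}_H(\e) \leq t$, the decoder of Algorithm~\ref{tab:a2} invokes any bounded-distance decoder of $\mycode{C}$, which recovers $\c$ exactly because $d \geq 2t+1$. Message recovery is then immediate from the structure of $\ve{G}$: the leading zero column of $\ve{G}_1$ together with $(\ve{G}_0)_0 = 1$ forces $c_0 = \lambda$, and the identity block in $\ve{G}_1$ combined with the all-ones entries of $\ve{G}_0$ gives $c_j = m_j + \lambda$ for $j = 1, \ldots, k_1$. Hence $m_j = c_j - c_0$ recovers each information symbol in a single subtraction.

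\textbf{Main obstacle.} The only non-routine ingredient is the pigeonhole counting in the masking step: it leverages both $u < q$ and the fact that $\ve{G}_0$ has no zero entry, so a single masking symbol always suffices for partially-stuck-at-$1$ cells (this is exactly where the redundancy saving over the stuck-cell model of \cite{heegard1983partitioned} comes from). Every remaining part reduces to the standard fact that a distance-$(2t+1)$ code corrects $t$ errors and to the systematic-with-offset shape of the leftmost $1 + k_1$ columns of $\ve{G}$, so no further code-design input is needed.
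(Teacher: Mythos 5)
Your proposal is correct and follows essentially the same route as the paper: the same pigeonhole argument over the $u<q$ forbidden shift values for the all-ones row, the same appeal to bounded-distance decoding of $\mycode{C}$ with $d\geq 2t+1$, and the same recovery of the masking symbol from the first coordinate followed by subtraction of $z_0\ve{G}_0$. The only cosmetic difference is that you forbid values of the shift $\lambda$ directly, whereas the paper first picks a value $v$ avoided by the $w_{\phi_i}$ and sets $z_0=-v$; these are the same argument.
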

\begin{algorithm}
	\caption{Encoding}
	\label{tab:a1}
	
	\KwIn{
	\begin{itemize}
		\item Message: 
		$\m = (m_0, m_1,\dots, m_{k_1-1})  \in {\mathbb{F}_q^{k_1}}$
		\item Positions of partially stuck cells:
		 $\ve{\phi}$
	\end{itemize}}
	
	Compute the message vector $\ve{w} = \ve{m} \cdot \ve{G}_1$

	Find $v \in \Fq$ such that $w_{\phi_i} \neq v$ 

	$z_0 \gets q-v \equiv -v \mod q$

	Compute the masking vector $\ve{d} = z_0 \cdot \ve{G_0}$

	Compute $\ve{c} = (\ve{w} + \ve{d})  \mod q$	

	\KwOut{Codeword $\ve{c} \in \mathbb{F}^n_q$ }
	
\end{algorithm}
\begin{algorithm}
		\label{tab:a2}
	\caption{Decoding}	
		\KwIn{
	\begin{itemize}
	     \item Retrieve $\ve{y} = \ve{c} + \ve{e}$ , $\ve{y} \in \mathbb{F}^n_q $
        \end{itemize} }
 
	$\hat{\c} \gets$ decode $\ve{y}$ in $\mathcal{C}$

	$\hat{z}_0 \gets$ first entry of $\c$

	$\hat {\ve{w}} = (\hat{w}_0, \hat{w}_1,\cdots, \hat{w}_{n-1}) \leftarrow (\hat{\ve{c}} - \hat{z}_0 \cdot \ve{G}_0)  \mod q$

	$\hat{\ve{m}}  \leftarrow (\hat{w}_1, \dots, \hat{w}_{k_1})$

	\KwOut{
	 Message  vector $\hat{\ve{m}}  \in \mathbb{F}^{k_1}_q$}  

\end{algorithm}	
\begin{proof}	 
	Since in partially stuck cells the stuck level $s_{\phi_i} \geq 1$ has to be masked, the output codeword $\ve{c}$ must match the partially stuck positions:
	\begin{equation} \label{equ}
	c_{\phi_0},c_{\phi_1},\dots, c_{\phi_{u-1}} \geq 1.
	\end{equation}
	Since $u < q$, there is at least one value $v \in \mathbb{F}_q$ such that ${w_\phi}_0, {w_\phi}_1,  \dots , {w_\phi}_{u-1} \not = v$.
	Thus, we choose $z_0 = q-v \equiv -v \mod q$. Therefore ${c}_{\phi_i}=w_{\phi_i}+z_0 \equiv ({w_\phi}_i - v) \mod q \not = 0$ and \eqref{equ} is satisfied.
	
	The decoder (Algorithm~\ref{tab:a2}) gets $\ve{y}$, which is $\c$ corrupted by at most $t$ errors. Since $\mathcal{C}$ has minimum distance $d \geq 2 t+1$, we can correct these errors and obtain $\c$.
	Due to the structure of the matrix $\G$, the first position of $\c$ equals the masking value $z_0$. Hence, we can compute $\hat{\ve{w}} = \ve{w}$ (cf.~Algorithm~\ref{tab:a2}) and the message vector $\hat {\mathbf{m}}=\mathbf{m}$. 
\end{proof}
	 Theorem~\ref{thm1} gives an extension of \cite[Theorem~1]{heegard1983partitioned} and \cite[Theorem~4]{wachterzeh2016codes}. It combines \cite{heegard1983partitioned} and \cite{wachterzeh2016codes} to provide a code construction that can mask partially stuck cells and correct errors. The required redundancy is a single symbol for masking plus the redundancy for the code $\mathcal{C}_1$. In comparison, \cite[Theorem~1]{heegard1983partitioned} requires at least 
\begin{align*}
\min\{ n-k \, : \, \exists \, [n,k,d]_q \text{ code with } d > u\} \geq u.
\end{align*}
redundancy symbols to mask $u$ cells, where the inequality follows directly from the Singleton bound.

The code construction of Theorem~\ref{thm1} is based on the matrix $\ve{P}$, which is hard to find in general. One way to construct such a matrix is to start with a generator matrix of a well-known code that contains the all-one vector (e.g. certain cyclic codes, see the next sections) and has a large enough minimum distance. It is easy to see that the generator matrix can be transformed into the form of Theorem~\ref{thm:matrix_construction:u<q} by elementary row operations and column permutations. 

\subsection{Masking Partially Stuck and Correcting Additional Random Errors using a Partitioned Cyclic Code Construction }

This section generalizes the construction of \cite[Theorem~2]{heegard1983partitioned}. Our construction uses a so-called \emph{partitioned cyclic code} as in \cite{heegard1983partitioned}, but requires only a single redundancy symbol $l=1$ for the masking operation similar to Theorem 4 and Algorithm 3 in \cite{wachterzeh2016codes}. 
Additionally, it corrects errors by selecting a generator polynomial $g_1(x)$ of degree $r$, where $0 \leq r < n-1$.
As the construction in the previous section, this new construction results in a reduced redundancy $l$ compared to masking stuck cells in \cite[Theorem~2]{heegard1983partitioned}.
Compared to Theorem~\ref{thm:matrix_construction:u<q} in the previous section, the cyclic construction directly implies a constructive strategy how to choose a code of a certain minimum distance.
\begin{const}\label{cons1}
	Let $u\leq \min \{n,q-1\}$. Let $\mycode{C}_0$ be an $[n, n-1,\delta_0]_q$ code with parity-check polynomial $g_0(x) = 1+x+x^2+\dots+x^{n-1}$.
	Let:
	\begin{equation*}
	c_0(x) = z_0 \cdot g_0(x).
	\end{equation*}
	Let $\mycode{C}_1$ be an $[n,n-r,\delta_1]_q$ code 
	where $g_1(x)$ is its monic generator polynomial of degree $r$ that divides $g_0(x)$, and $h_1(x)$ is its parity-check polynomial.
	Then let $\mycode{C}$ be an $[n,n-r-1,\delta_1,\delta_0]_q$ partitioned cyclic code built from $\mycode{C}_1$ and $\mycode{C}_0$ with the following encoding rule:
	\begin{equation*}
	c(x) =  m(x) \cdot g_1(x) + z_0 \cdot g_0(x)  , 
	\end{equation*}
	where $m(x)\in \mathbb{F}_q[x]$ is the message polynomial of degree $< n-r-1$ and $z_0$ is a scalar.
\end{const}

\begin{table*}[b]
	\caption{ Ternary Codes for Partially Stuck-at-1 Memory Cell}
	\label{table2}
	\vspace{-.5 cm}
	\begin{center}
		\scalebox{0.9}{
			\begin{tabular}{ |l| l | l | l |l |l |l| l | l |l| l | l |}
				\hline
				\#&$k_1$ &$k_1^*$ &$l$ &$l^*$ &$r$&$\delta_0$ &$\delta_1$ &$t$&$h_0(x)$&$g_1(x)$&\textbf{Comment} \\ \hline
				1&6 &6.387& 1& 0.613& 1 & 2& 2&0&$M^{(4)}(x)$&$M^{(0)}(x)$& This is masking only with one redundancy symbol. \\ \hline
				2&5&5.387& 1 &0.613& 2 & 2 &2 &0 &$M^{(4)}(x)$&$M^{(5)}(x)$& Flexibility in storing a desired $k$ information symbols. \\ \hline
				3&4&4.387& 1& 0.613&  3& 2 & 3&1 &$M^{(4)}(x)$&$M^{(0)}(x)\cdot M^{(1)}(x)$& Same as Table~\ref{table3} in assigning factors $\rightarrow$ same parameters.\\ \hline
				
				4&3&3.387& 1&0.613& 4 &2  & 3 & 1&$M^{(4)}(x)$&$M^{(1)}(x)\cdot M^{(2)}(x)$&Same as Table~\ref{table3} in assigning factors $\rightarrow$ same parameters.\\ \hline
				5&2&2.387&1 & 0.613&5  &2  &5 &2 &$M^{(4)}(x) $&$M^{(0)}(x)\cdot M^{(1)}(x)\cdot M^{(2)}(x)$& Same as Table~\ref{table3} in assigning factors $\rightarrow$ same parameters.   \\ \hline
				6&2&2.387&1&0.613&5&2&3&1&$M^{(4)}(x)$&$M^{(0)}(x)\cdot M^{(1)}(x)\cdot M^{(5)}(x)$& No change in $\delta_1$, however it is more likely to be increased. \\ \hline
				
				7& 1&1.387& 1&0.613 & 6 & 2  &4 &1 &$M^{(4)}(x)$&$M^{(1)}(x)\cdot M^{(2)}(x)\cdot M^{(5)}(x)$&Same as Table~\ref{table3}, but it tends to be able to correct more errors.\\ \hline

			\end{tabular}
		}
	\end{center}
\end{table*}
\begin{table*}[b]
	\caption{Ternary Codes for Stuck-at Memory \cite {heegard1983partitioned}}
	\label{table3}
	\vspace{-.6 cm}
	\begin{center}
		\scalebox{0.95}{
			\begin{tabular}{ |l| l | l | l |l |l |l| l | l |l|}
				\hline
				\#&$k_1$ & $l$ & $r$& $\delta_0$ &$\delta_1$ &$t$ &$h_0(x)$&$g_1(x)$& \textbf{Comment} \\ \hline
				1&6 & 1 & 1 & 2& 2&0&$M^{(4)}(x)$&$M^{(0)}(x)$& Masking only. \\ \hline
				2&4& 1 &  3& 2 & 3&1 &$M^{(4)}(x)$&$M^{(0)}(x)\cdot M^{(1)}(x)$&  Same as Table~\ref{table2} in assigning factors $\rightarrow$ same parameters.\\ \hline
				3&3& 1& 4 &2  & 4 & 1&$M^{(4)}(x)$&$M^{(1)}(x)\cdot M^{(2)}(x)$& Same as Table~\ref{table2} in assigning factors $\rightarrow$ same parameters.\\ \hline
				4&2& 1 & 5 & 2 &5 &2 &$M^{(4)}(x)$&$M^{(0)}(x)\cdot M^{(1)}(x)\cdot M^{(2)}(x)$&Same as Table~\ref{table2} in assigning factors $\rightarrow$ same parameters. \\ \hline
				5&2&3  &3  &3  &3 &1 &$M^{(4)}(x) \cdot M^{(5)}(x)$&$M^{(0)}(x)\cdot M^{(1)}(x)$& $l = 3$ less flexibility in chosen $r$.   \\ \hline
				6 & 1& 3 & 4 &3  &4 &1 &$M^{(4)}(x)\cdot M^{(5)}(x)$&$M^{(1)}(x)\cdot M^{(2)}(x)$& Same as Table~\ref{table2}. However, less likely to correct more errors.\\ \hline
				
			\end{tabular}
		}	
	\end{center}
\end{table*}

\begin{thmmystyle}\label{thm:cyclic_construction}
	If $u\leq \min \{n,q-1\}$, Construction~\ref{cons1} provides an $(n, M = q^{n-r-1})_q$ ($u$,$t$)-PSMC with redundancy of $1+r$ symbols by using Algorithms~\ref{a3} and \ref{a4}.
	\label{thm2}
\end{thmmystyle}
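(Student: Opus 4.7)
The plan is to mirror the two-stage argument used in the proof of Theorem~\ref{thm1} and in \cite[Theorem~2]{heegard1983partitioned}, while exploiting the single-symbol masking trick of \cite{wachterzeh2016codes}. Concretely, I will verify (i) that the encoder always finds a masking scalar $z_0 \in \mathbb{F}_q$ making every partially-stuck coordinate of the codeword nonzero, (ii) that the decoder unambiguously recovers the error vector $\ve{e}$ and then separates $z_0$ and $m$ from the corrected codeword, and (iii) that the claimed parameters follow.

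For (i), the key observation is that $g_0(x) = 1 + x + \cdots + x^{n-1}$ has the all-ones coefficient vector, so the masking term $z_0\, g_0(x)$ contributes exactly $z_0$ to every coordinate of the codeword. Hence the $j$-th coordinate is $c_j = [m(x)g_1(x)]_j + z_0$, and the constraint $c_{\phi_i} \neq 0$ becomes
\[
z_0 \neq -[m(x)g_1(x)]_{\phi_i}, \qquad i = 0, 1, \ldots, u-1.
\]
These exclude at most $u$ values of $z_0$; since $u \leq q-1$, at least one admissible $z_0$ exists, and Algorithm~\ref{a3} selects it.

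For (ii), the divisibility $g_1(x) \mid g_0(x)$ imposed by Construction~\ref{cons1} is what makes everything work. Given $y(x) = c(x) + e(x)$, compute the $\mycode{C}_1$-syndrome $s(x) := y(x) \bmod g_1(x)$. Both $m(x)\,g_1(x)$ and $z_0\, g_0(x)$ reduce to $0$ modulo $g_1(x)$, so $s(x) = e(x) \bmod g_1(x)$. Since $\mycode{C}_1$ has minimum distance $\delta_1 \geq 2t+1$, any standard $\mycode{C}_1$-decoder extracts $\ve{e}$ from $s$, and hence $c = y - e$. To split $c(x)$ into $(m, z_0)$, set $p(x) := g_0(x)/g_1(x)$, which is monic of degree $n-r-1$ because $g_0$ and $g_1$ are monic of degrees $n-1$ and $r$ respectively. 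Polynomial division then yields
\[
c(x)/g_1(x) \;=\; m(x) + z_0\, p(x),
\]
and since $\deg m < n-r-1$ while $p$ has leading coefficient $1$ at $x^{n-r-1}$, the coefficient of $x^{n-r-1}$ on the right is exactly $z_0$. Subtracting $z_0\, p(x)$ then reveals $m(x)$.

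Part (iii) is immediate: the encoder injects $n-r-1$ information symbols and a single masking symbol into a length-$n$ vector, giving redundancy $r+1$ and $M = q^{n-r-1}$. The main conceptual point, which also dispatches the only potentially subtle step, is that the single divisibility $g_1 \mid g_0$ does double duty---it makes the masking contribution invisible to the $\mycode{C}_1$-syndrome, and it turns the $(m, z_0)$-extraction into one polynomial division. I do not anticipate obstacles beyond bookkeeping on polynomial degrees and a uniform handling of the $z_0 = 0$ corner case, which the leading-coefficient argument above already covers.
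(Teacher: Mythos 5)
Your proposal is correct and follows essentially the same route as the paper: the all-ones polynomial $g_0(x)$ lets a single scalar $z_0$ shift every coordinate (so $u\leq q-1$ excluded values always leave an admissible choice), and the divisibility $g_1(x)\mid g_0(x)$ puts $c(x)$ in $\mycode{C}_1$ so that $t$ errors are correctable. The only cosmetic difference is in the unmasking step: the paper reduces the corrected codeword modulo $g_0(x)$ to annihilate the $z_0 g_0(x)$ term and then divides by $g_1(x)$, whereas you divide by $g_1(x)$ first and read $z_0$ off the leading coefficient of $g_0/g_1$ --- equivalent bookkeeping, and your version is arguably stated more carefully.
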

\vspace{-0.3cm}
\begin{algorithm} \label{a3}
	\caption{Encoding}
	
		\KwIn{
	\begin{itemize}
		\item Message:
		$m(x)  \in F_q[x]$ of degree $ < n-r-1$
		\item Positions of partially stuck cells:
		$\ve{\phi}$
	\end{itemize}}

		$n = q^m - 1$ , $m$ is an integer
		
		Let $g_0(x) =1+x+x^2+ \dots +x^{n-1}$ and $h_0(x) =x-1$
		
		Select $g_1(x)\mid g_0(x)$ of degree $r$, where $0 \leq r < n-1$ 
		 
		$c_1(x) = {c_1}_0 +\dots+ {c_1}_{n-2}\cdot x^{n-2} \leftarrow m(x) \cdot g_1(x)$
		 	
		Assign $v \in \mathbb{F}_q$ such that ${c_1}_{\phi_0}, {c_1}_{\phi_1}, \dots , {c_1}_{\phi_{u-1}} \not = v$
		
		Find $z_0 = -v \mod q$
		
		$c_0(x) = z_0 \cdot g_0(x)$
		 	
		$  c(x) = c_1(x) + c_0(x) \mod (x^n-1)$

    \KwOut{
	Codeword $c(x) \in \mathbb{F}_q[x]$ of degree $\leq n-1$}	
\end{algorithm}
\vspace{-0.4cm}
 \begin{algorithm}\label{a4}
	\caption{Decoding}	
	
		\KwIn{Retrieve $y(x) = c(x) + e (x)$}

	$\hat{c}(x) \gets$ Decode $y(x)$ in the code generated by $g_1(x)$.
		 
	$\hat{m}(x) \gets \hat{c}(x) \mod g_0(x)$
		
	\KwOut{Message $\hat m(x) \in \mathbb{F}_q[x]$ of degree $< n-r-1$ } 
	
\end{algorithm}
\vspace{-0.3cm}	
\begin{proof}	
	Algorithm~\ref{a3} shows the encoding process for the partitioning cyclic code construction. 
	Let $m(x) = m_0+m_1\cdot x+m_2\cdot x^2+\dots+ m_{n-r-2}\cdot x^{n-r-2}$. Algorithm~\ref{a3} calculates  ${c}_1(x)$ in Step~4 of degree $< n-1$.
	Since $u < q$, there is at least a single value $v \in \mathbb{F}_q$ such that the coefficients of $c_1(x)$ in the partially stuck positions ${c_1}_{\phi_0}, {c_1}_{\phi_1}, \dots , {c_1}_{\phi_{u-1}} \not = v$.
	So we choose $z_0 = -v \mod q$ as shown in Step~5. 
	Note that the value of $z_0$ always appears in $ c(x)$. That is, the last memory location stores $z_0$. Since in partially stuck cells the stuck level $s_{\phi_i} \geq 1$, the coefficients of $c(x)$ must match the partially stuck positions as in (\ref{equ}). 
	Equation~(\ref{equ}) is also satisfied because ${c}_{\phi_i}= {c_1}_{\phi_i}+z_0 \equiv ({c_1}_{\phi_i} - v) \mod q \neq 0$. 
	
	Algorithm~\ref{a4} decodes the retrieved polynomial $y(x)$.
	First, decode $y(x)$ with the code generated by $g_1(x)$. We can correct a random error $e(x)$ (if any) by taking the module operation $mod\text{ }g_1(x)$. We must choose $\hat e(x) \in \mathbb{F}^n_q $ which minimizes $e(x)$ and satisfies $\hat e(x) \mod g_1(x) = e(x) \mod g_1(x)$. 
	
	Then, the algorithm performs the unmasking process to find $\hat m(x)$. If $\hat m(x) = m(x)$ of degree $ < n-r-1$, the decoding is successful. Taking$\mod g_0(x)$ for $\hat{c}(x)$ gives $\hat m(x) = m(x)$:
		  \begin{equation*}
	\begin{array}{l}
	 \hat{c}(x) = \hat m(x)\cdot  g_1(x) + z_0 \cdot g_0(x)
	\rightarrow \\\hat m(x) = \dfrac{\hat{c}_1(x) \mod g_0(x)}{g_1(x)}\\
	\rightarrow \hat m(x) = m(x)
	\end{array}
	\end{equation*}	
\end{proof}
\begin{remark}
We briefly comment on how to choose $r$ in Theorem~\ref{thm:cyclic_construction}.
To store $n-1$ information symbols, choose $r = 0 $ and $ g_1(x) = x^0 = 1 $ so that $\mycode{C}$ becomes $\mycode{C}_0$. 
For combined masking and error correction, choose $r$ between $1 \leq r< n-1$.
However, the chosen value of $r$ should be the smallest value such that an $[n$, $n-r$, $\delta_1]_q$ code exists. 
\end{remark}

\subsection{Further Decreasing the Redundancy in Construction~\ref{cons1}}
According to \cite[Construction~3]{wachterzeh2016codes}, it is possible to further decrease the required redundancy necessary for masking to be $<1$ in Construction~\ref{cons1} if the $v$ value is chosen from a small subset of $[q]$ such that $v \in [u+1]$ and $v \not \in {w}_{\phi_{i}} \mod (u+1) $. Since the set
$\{{w_\phi}_0 \mod (u+1), {w_\phi}_1 \mod (u+1), {w_\phi}_2\mod (u+1), \dots , {w_\phi}_{u-1}\mod (u+1)\}\mod (u+1)\} $ 
has the cardinality $u$ and there are $u+1$ possible values to chose from, we can always find $v \in [u+1]$. Thus, the stored information $k_1$ increases by $\log_q \lfloor \frac{q}{u+1} \rfloor$ which is the amount that the required redundancy for masking decreases $1- \log_q \lfloor \frac{q}{u+1} \rfloor$. Construction~\ref{cons1} will be modified such that $v$ and the coefficients of $g_0(x)$, $h_0(x)$, and $g_1(x) \in [u+1]$ (from a small subset of $q$), while $z_0\in [q]$ and $\ve{m} \in {\mathbb{F}_q^{k_1}}$. 
We show the improvement in Construction~\ref{cons1} for $k_1$ and $l$ by $k_1^*$ and $l^*$ columns in Table~\ref{table2}. We choose $q=6$ and $u+1 = 3$ to show the improvement for the same $n$ length.  

\subsection{Bounds for Partitioned Cyclic Code}
As aforementioned, we use a partitioned cyclic $\mycode{C}$ (involves $\mycode{C}_1$ and $\mycode{C}_0$) code to mask partially stuck memory cells and correct additional random errors. However, to derive bounds for the minimum distance for the error correction part of the code $\mycode{C}_1$, we apply the BCH bound. In the \emph{partitioned} cyclic  code, there is a pair of designed distances as $(\delta_1,$ $\delta_0)$, where $\delta_1$ is the designed distance of $\mycode{C}_1$ with parameters $[n,k_1+1,\delta_1]_q$ and
$\delta_0$ is the designed distance for the masking part of the code $\mycode{C}_0$ with parameters $[n,k_1+r,\delta_0]_q$. 

The lower bounds are the pair of designed distances $(\delta_1,$ $\delta_0)$ and they are computed as shown in the following:

	\begin{itemize}
		\item $\delta_0 = 2$, 
use a $[n, n-1,2]_2$ Single Parity Check Code.
\item $\delta_1$ follows always the chosen degree $r$ of $g_1(x)$ based on the given lower bound in \cite [Appendix]{heegard1983partitioned}:
		\begin{equation}\label{equar}
			r \leq m \left\lceil \frac{\delta_1 - 1}{2}\right\rceil.
		\end{equation}	
	\end{itemize}

In the following section, we present Table~\ref{table2} of length $n=8$ to compare ternary code for partially stuck memory cells to stuck cells with error correction as in \cite{heegard1983partitioned}.

\subsection{Table of Ternary Code with Code Length ($n = 8$)}
Table~\ref{table2} presents the maximum amount of information to be stored and the maximum errors that can be corrected. Comparing Table~\ref{table2} to Table~\ref{table3} from \cite{heegard1983partitioned}, an improvement is shown in Table~\ref{table2} since only one parity symbol for masking is used. 
For a ternary code $n= q^m-1=8$, for $m=2 \text{ and } q=3$, ($x^8-1$) factors are:\\ $(x+1) \cdot (x^2+2x+2) \cdot (x^2+1) \cdot (x+2) \cdot (x^2+x+2)  \equiv M^{(0)}(x)\cdot M^{(1)}(x)\cdot M^{(2)}(x) \cdot M^{(4)}(x)\cdot M^{(5)}(x)$ respectively. 

   Although the overall redundancy $r + l = 6$ used as shown in item~6 in Table~\ref{table2} and item~5 in Table~\ref{table3}, longer $r$ in Table~\ref{table2} increases the probability to correct more errors. This is an improvement from \cite{heegard1983partitioned}. Furthermore, Table~\ref{table2} shows more flexibility in storing $k_1$ symbols as shown in item 2 where $k_1=5$. For longer code length $n$, our construction is more likely to have longer consecutive cyclotomic cosets and accordingly higher minimum distance, thus, can correct more errors than \cite {heegard1983partitioned}.

 \section{New Codes for Partially Stuck-At Cells $s_{\phi_i} =1$, $q \leq u < n$} 

The masking technique in the previous section only guarantees successful masking up to a number of $u<q$ partially stuck-at-1 cells. In the following, we study the problem of masking at least $q$ cells. First, we determine the probability that masking is still possible with Construction~\ref{cons1} for random partially stuck positions. Based on \cite[Construction~4]{wachterzeh2016codes}, we propose a method to further increase the guaranteed number of masked cells at the cost of a masking redundancy of more than one symbol.

\subsection{Probabilistic Masking}
	
We determine the probability that for a random message, masking is possible for a number of stuck positions $u \geq q$ if the code constructions in Theorem~\ref{thm1} and Theorem~\ref{thm2} are used.
This \emph{probabilistic masking} approach enables us to use a row of memory with a certain probability even if more than $q-1$ partially stuck cells are present.

  \begin{thmmystyle}
  	\label{thm3}
	
	Let $\ve{G}$ be as in Theorem~\ref{thm1}, $q \leq u \leq k_1$, $s_{\phi_i} = 1$, and $\phi_1,\dots,\phi_u$ such that the columns of $\ve{G}$ indexed by the $\phi_i$ are linearly independent.
	For a message $\m \in \Fq^{k_1}$ that is drawn uniformly at random, the probability that we can mask the word is
	\begin{equation*}
  	\label{eq4}
		\mathrm{P}(q,u) = 
  		{\dfrac{ \sum_{i=1}^{q} (-1)^{i+1} {q \choose i}\cdot (q-i)^{u}}{q^{u}}}
	\end{equation*}
  \end{thmmystyle}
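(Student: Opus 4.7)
The plan is to reduce the successful-masking event to a statement about $u$ jointly uniform $\F_q$-valued random variables and then compute its probability by inclusion--exclusion.

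First I would observe that, by the structure of $\ve{G}$ prescribed in Theorem~\ref{thm1}, the row $\ve{G}_0$ equals the all-ones vector $\ve{1}_{1\times n}$. Hence, in Algorithm~\ref{tab:a1}, the masking term $z_0 \ve{G}_0$ simply adds the same scalar $z_0 \in \F_q$ to every coordinate of $\ve{w} = \m \ve{G}_1$. Consequently $\c$ satisfies the partial-stuck constraint $c_{\phi_i} \geq 1$ for all $i$ (equivalently $c_{\phi_i}\neq 0$, since $s_{\phi_i}=1$) exactly when there exists some $v = -z_0 \in \F_q$ with $w_{\phi_i}\neq v$ for every $i$; that is, exactly when the values $\{w_{\phi_0},\ldots,w_{\phi_{u-1}}\}$ fail to exhaust $\F_q$.

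Second I would establish that the vector $(w_{\phi_0},\ldots,w_{\phi_{u-1}}) = \m\,[\ve{G}_1]_{\ve{\phi}}$ is uniformly distributed on $\F_q^u$ when $\m$ is uniform on $\F_q^{k_1}$. The $\F_q$-linear map $\m\mapsto \m\,[\ve{G}_1]_{\ve{\phi}}$ is surjective precisely when the $u$ columns of $\ve{G}_1$ indexed by $\phi_0,\ldots,\phi_{u-1}$ are linearly independent; in that case, since $u\leq k_1$, every fibre has size $q^{k_1-u}$, and the pushforward of the uniform measure on $\F_q^{k_1}$ is uniform on $\F_q^u$. The hypothesis that the corresponding columns of $\ve{G}$ are linearly independent, together with $u\leq k_1$ and the all-ones top row, is invoked precisely to guarantee this full column rank.

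Finally I would apply inclusion--exclusion. Letting $B_v := \{\,w_{\phi_i}\neq v \text{ for all } i\,\}$ for each $v\in\F_q$, the uniformity just established makes the $u$ coordinates $w_{\phi_i}$ independent and uniform on $\F_q$, so $\Pr[\bigcap_{v\in S} B_v] = ((q-|S|)/q)^u$ for any $S\subseteq\F_q$. The masking probability is $\Pr[\bigcup_{v\in\F_q} B_v]$, and inclusion--exclusion yields
\[
  \Pr\Bigl[\bigcup_{v\in\F_q} B_v\Bigr] \;=\; \sum_{i=1}^{q}(-1)^{i+1}\binom{q}{i}\frac{(q-i)^u}{q^u},
\]
matching the claimed expression (the $i=q$ term vanishes automatically since $u\geq 1$).

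The main obstacle is the uniformity step: the hypothesis concerns independence of the columns of $\ve{G}$, whereas uniformity of $(w_{\phi_i})$ really requires full column rank of $[\ve{G}_1]_{\ve{\phi}}$. Because the top row of $\ve{G}$ is all-ones, these two conditions are not strictly equivalent (a dependency among the $\ve{G}_1$-columns whose coefficients sum to a nonzero value would still leave the $\ve{G}$-columns independent), so some care is needed to bridge the gap, or to interpret the hypothesis as the slightly stronger statement about $\ve{G}_1$ that the argument actually uses.
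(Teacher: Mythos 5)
Your proof follows essentially the same route as the paper's: reduce successful masking to the event that the values $w_{\phi_i}$ do not exhaust $\F_q$, use the linear-independence hypothesis to obtain uniformity of $(w_{\phi_0},\dots,w_{\phi_{u-1}})$ on $\F_q^u$, and finish with inclusion--exclusion. Your closing remark about the mismatch between independence of the columns of $\ve{G}$ and full column rank of the submatrix of $\ve{G}_1$ indexed by $\ve{\phi}$ flags a genuine imprecision that the paper's own proof also glosses over, and reading the hypothesis as the corresponding condition on $\ve{G}_1$ is the right fix.
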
 
  
	\begin{proof}
	The code discussed in Theorem~\ref{thm1} guarantees to mask $u<q$ partially stuck cells since $u$ values at the stuck positions of the intermediate codeword $\ve{w}$ (after encoding only the message) do not cover the entire alphabet $\{0,\dots,q-1\}$. Hence, we can add a constant to all of them in order to prevent the codeword values to be $0$ in the stuck positions.
	For the probabilistic case, we can apply the same arguments, but we need to derive a probability that the intermediate codeword values at the partially stuck positions do not constitute the entire alphabet.

	Due to the assumption on the linear independence of $\ve{G}$'s columns, the vector $(w_{\phi_1},\dots,w_{\phi_u})$ is uniformly distributed on $\Fq^u$. The formula for $\mathrm{P}(q,u)$ uses the inclusion-exclusion principle~\cite{kharazishvili2008combinatorial} to count the relative number of vectors in $\Fq^u$ that exclude at least one field element.
	\end{proof}
\begin{table*}[h]
	\caption{Comparison between \cite {heegard1983partitioned}, \cite {wachterzeh2016codes}, and this work. Notation: See Section~\ref{ssec:notation}.}
	\label{table1}
	\vspace{-.4 cm}
\begin{center}
\scalebox{1}{
\def\arraystretch{1.5}
\begin{tabular}{ |p{2cm}| p{5cm} || p{3 cm} | | p{6 cm} |}
	\hline
	&\textbf{\!\!\cite{heegard1983partitioned} (Stuck cells)} & \textbf{\!\!\cite{wachterzeh2016codes} (Partially stuck cells)} & \textbf{This Work (Partially stuck cells)} \\ \hline
	error correction & yes & no & yes \\ \hline
	 $k_1$&If $l = 1$, then $k_1 = n- r  - 1$& If $l = 1$, then $k_1$ $ \leq n - 1$ &  $k_1$ $ \leq n - 1 -r$. If $r =0$, it is similar to \cite {wachterzeh2016codes} \\ \hline
	 $l$& $l = n - k_1 - r$&  $l = n - k_1 = 1$ &  $l = n - k_1 -r = 1$\\ \hline
	 $r$& $r = n - k_1 - l$ & None &$r = n - k_1 - 1$ \\ \hline
	$\delta_0$ and $\delta_1$ & $\delta_0$ for masking and $\delta_1$ for error correction  & $\delta_0 = 2$ \vphantom{\LARGE H} & $\delta_0 = 2$ , $\delta_1$ follows the chosen $r \rightarrow$ $r \leq m \lceil \tfrac{\delta_1 - 1}{2}\rceil$ \\ \hline
	$t$& $\leq \lfloor\frac{\delta_1 -1 }{2}\rfloor$ & $0$ &$\leq \lfloor\frac{\delta_1 -1 }{2}\rfloor$ \\ \hline
	$u$& $u < \delta_0$ and $2t < \delta_1$, Or $u \geq \delta_0$ and $2(u+t+1-\delta_0) < \delta_1 $& $u \leq n$ and $u<q$&  $u\leq \min \{n,q-1\}$ (Theorem~\ref{thm1} and~\ref{thm2}), or $u\leq n$ (Theorem~\ref{thm3} and~\ref{thm4}) \\ \hline
	$s_{\phi_i}$&All levels  ($0,\dots,q-1$)  & Partial levels ($1, \dots, q-1$)& $1$ but expendable to arbitrary $s_{\phi_i}$  \\ \hline
	Name& $u$-SMC  & $u$-PSMC &($u$, $t$)-PSMC\\ \hline

\end{tabular}
}
\end{center}
\end{table*}
The following example illustrates that the probability that masking is successful can be quite large.

 \emph{Example}. Let $q =3$, $n=8$, $r =0$ and $v \in [q]$, the probability in which we can mask $u=n-1$ partially stuck-at-1 memory cells because the values at the partially stuck positions are from a set of size at most $q-1$ is:
  	\begin{equation*}  
     		{\frac{3\cdot(3 -1)^{7} - {3 \choose 2} \cdot (3 -2)^{7} + {3 \choose 3} \cdot (3 -3)^{7}}{3^{7}}}= 0.17. 
  \end{equation*}
  This ratio will be $0.77$ if $u=q$ and clearly it is $1$ if $u<q$. Let $\ve{m} = (2002220)\in \F_{3}^{7}$ so the augmented message vector $\ve{w} = (\textcolor{orange}{0}2002220)\in \F_{3}^{8}$. Let the partially stuck positions be ${w_\phi}_0, {w_\phi}_1, {w_\phi}_2, \dots , {w_\phi}_i$ for all $i \in [u]$ and $u=n-1$. Thus, $v = 1$ and $z_0 = 2$. The output from Algorithm~\ref{tab:a1} in Theorem~\ref{thm1} is:
  \begin{equation*}
  \ve{c} = (02002220) + (22222222) = (21221112).
  \end{equation*}
  As it is shown from the output vector, the partially stuck-at-1 positions for all $u$ are masked because they fulfill the following condition:
  \begin{equation*}
  {w_\phi}_i + z_0 \mod q  \geq 1,  \text{ where } q \leq u < n. 
  \end{equation*} 

\begin{remark}
The assumption in Theorem~\ref{thm3} that the columns of $\ve{G}$ indexed by the partially stuck positions are linearly independent is fulfilled for most known codes with high probability if $u\leq k_1$, especially if $u \ll k_1$. For dependent columns, it becomes harder to count the number of vectors that cover not the entire alphabet since $(w_{\phi_1},\dots,w_{\phi_u})$ is uniformly distributed on a subspace of $\Fq^{k_1}$.
\end{remark}
    	
\subsection{Masking Up to $u\leq q+d_0-3$ Partially Stuck-At Cells}

Construction~4 in \cite{wachterzeh2016codes} masks $q \leq u \leq n$ partially stuck-at-1 cells. It is a generalization of the all-one vector as stated \cite[Theorem~4]{wachterzeh2016codes} because if $d_0 = 2$, $u\leq q-1$. Theorem~\ref{thm1} in this paper provides a construction to mask and correct errors using a matrix with specific form that has ($\vec{G}_1$ and $\vec{G}_0$). Extending Theorem~\ref{thm1} by using the parity-check matrix in \cite[Construction~4]{wachterzeh2016codes} instead of $\ve{G}_0$ given in Theorem~\ref{thm1} will provide a solution to mask $u\leq q+d_0-3$ and correct $t$ errors. The result is formulated in Theorem~\ref{thm4}.    
  \begin{thmmystyle}
	\label{thm4}
	Let $u\leq q+d_0-3$ and $s_{\phi_i} = 1$. Assume there is an $[n, k, d_0 \geq u-q+3,d_1 \geq 2t+1]_q$ code $\mycode{C}$ with  a $k \times n$ generator matrix of the following form:
	\begin{center}			
	\scalebox{1}{	
		$ \ve{G} = \begin{bmatrix}  \ve{G}_1\\ \ve{H}_0  \end{bmatrix} = \begin{bmatrix} \ve{0}_{k_1 \times l} & \ve{I}_{k_1} & &&\ve{P}_{k_1\times r} \\ &\ve{H}_0   &   \end{bmatrix}$}
		
	\end{center}
	where: 
	\begin{itemize}
		\item $\ve{G}_1$ is a $k_1 \times n$ generator matrix of an $[n, k_1]_q$ code $\mycode{C}_1$.
		\item For simplicity $\ve{H}_0$ is a systematic $l \times n$ parity-check matrix of an $[n, k_1+r, d_0]_q$ code $\mycode{C}_0$.  
		\item $k_1 = n - l - r$.
		\item $k = k_1 + l$.
		\item $\ve{P} \in \mathbb{F}^{k_1\times r}_q $.
	\end{itemize}
	
	By using Algorithm~\ref{a5} and ~\ref{a6}, this code is a ($u$,$t$)-PSMC which can mask any $u\leq n$ partially stuck memory cells, and can correct $t$ additional random errors while storing $k_1 = n-r-l$ information symbols. 		
\end{thmmystyle}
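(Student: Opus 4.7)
\emph{Proof proposal.}
My plan is to combine the masking argument of \cite[Construction~4]{wachterzeh2016codes} with the error-correction/message-recovery scheme from Theorem~\ref{thm1}. Given a message $\ve{m}$, the encoder (Algorithm~\ref{a5}) first forms the intermediate vector $\ve{w} = \ve{m}\,\ve{G}_1$ (whose first $l$ coordinates vanish because of the $\ve{0}_{k_1 \times l}$ block inside $\ve{G}_1$) and then searches for a masking vector $\ve{z} \in \mathbb{F}_q^{l}$ such that $\ve{c} = \ve{w} + \ve{z}\,\ve{H}_0$ satisfies $c_{\phi_i} \neq 0$ at every partially-stuck position. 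Because both $\ve{w}$ and $\ve{z}\,\ve{H}_0$ lie in the row space of $\ve{G}$, so does $\ve{c}$, which is what will enable error decoding at the receiver.

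The crucial step---and the main obstacle---is proving existence of such a $\ve{z}$ under the hypothesis $u \leq q + d_0 - 3$. Here I would exploit that $\ve{H}_0$ is the parity-check matrix of an $[n,k_1+r,d_0]_q$ code, so any $d_0-1$ columns of $\ve{H}_0$ are linearly independent and the projection of its row space onto any $d_0-1$ of the coordinates $\{\phi_1,\dots,\phi_u\}$ is surjective onto $\mathbb{F}_q^{d_0-1}$. I would then adapt the counting argument of \cite[Construction~4]{wachterzeh2016codes} essentially verbatim: that argument was originally phrased for masking the zero word against the forbidden value $0$ at each stuck position, but depends on the forbidden values only coordinate-wise, so the substitution ``forbidden value at $\phi_i$ equals $-w_{\phi_i}$'' is immediate. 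The bound $u \leq q + d_0 - 3$ is exactly what guarantees that at least one of the $q^{l}$ candidate vectors $\ve{z}\,\ve{H}_0$ avoids all $u$ forbidden values simultaneously.

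For the decoder (Algorithm~\ref{a6}) I would invoke that $\mycode{C}$ has minimum distance $d_1 \geq 2t+1$, so any decoder for $\mycode{C}$ recovers $\ve{c}$ from $\ve{y} = \ve{c} + \ve{e}$ whenever $\ve{e}$ has Hamming weight at most $t$. Message recovery is then immediate from the systematic form of $\ve{G}$: since the first $l$ coordinates of $\ve{w}$ are zero, the first $l$ coordinates of $\ve{c}$ coincide with those of $\ve{z}\,\ve{H}_0$ and, by the systematic choice of $\ve{H}_0$, read off $\ve{z}$ directly; subtracting $\ve{z}\,\ve{H}_0$ from $\hat{\ve{c}}$ then exposes $\ve{w}$, whose middle $k_1$ coordinates are $\ve{m}$.

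The principal bookkeeping I expect to have to check carefully is turning the existence argument of \cite[Construction~4]{wachterzeh2016codes} into a constructive search (a small linear-algebraic computation over $\mathbb{F}_q$ in the $l$ unknowns $\ve{z}$) inside Algorithm~\ref{a5}; once that search is implemented, the remaining verifications---that $c_{\phi_i} \geq 1$ at the output and that the decoder returns $\ve{m}$ exactly---are direct.
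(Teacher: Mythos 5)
Your proposal matches the paper's proof in both structure and substance: the paper likewise reduces the masking step to \cite[Construction~4, Theorem~8]{wachterzeh2016codes} (replacing the all-ones row of Theorem~\ref{thm1} by the parity-check matrix $\ve{H}_0$ of a distance-$d_0$ code, with the choice of $\ve{z}$ delegated to the algorithm of that reference) and the error-correction and message-recovery steps to the argument of Theorem~\ref{thm1}. The only difference is that the paper carries out these reductions purely by citation, whereas you additionally sketch the underlying ingredients (linear independence of any $d_0-1$ columns of $\ve{H}_0$ and the resulting avoidance argument for the $u$ forbidden values), which is consistent with the cited construction.
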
 
\begin{algorithm}
	
	\caption{Encoding}
	\label{a5}
		\KwIn{
	\begin{itemize}
		\item Message: 
		$m = (m_0, m_1,\dots, m_{k-1})  \in {\mathbb{F}_q^{k_1}}$
		\item Positions of partially stuck cells: $\ve{\phi}$
	\end{itemize}
}
	
		 Compute the final message vector $\ve{w} = \ve{m} \cdot \ve{G}_1$
		 
		 Find  $\ve{z} = (z_0,z_1,\cdots z_{l-1})$ $\in \mathbb{F}_q$ as explained in the proof of \cite[Theorem~7]{wachterzeh2016codes} 
		 	
		 Compute the masking vector $\ve{d} = \ve{z} \cdot \ve{H}_0$
		 
		 Compute $\ve{c} = (\ve{w} + \ve{d})  \mod q$

		\KwOut{
	Codeword $\ve{c} \in \mathbb{F}^n_q$}

\end{algorithm}
\vspace{-0.3cm}
\begin{algorithm}
	\label{a6}
	\caption{Decoding}	
	\KwIn{$\ve{y} = \ve{c}+ \ve{e} \in \mathbb{F}^n_q $}

	$\hat{\c} \gets$ decode $\y$ in the code $\mathcal{C}$
		
		Unmasking Process:\\
		\begin{itemize}
			\item $\hat{\ve{z}}\leftarrow (\hat{c}_0,\hat{c}_1,\cdots \hat{c}_{l-1})$
			
			\item $\hat {\ve{w}} = (\hat{w_0}, \hat{w_1},\cdots, \hat{w}_{n-1}) \leftarrow (\hat{\c} - \hat{\ve{z}} \cdot \ve{H}_0)  \mod q$
			
			\item $\hat{\ve{m}}  \leftarrow (\hat{w}_l,\hat{w}_{l+1} \dots, \hat{w}_{n-r-1})$
		\end{itemize}

	\KwOut{
	Message  vector $\hat{\ve{m}}  \in \mathbb{F}^{k_1}_q$  }
	
\end{algorithm}	
\vspace{-0.5cm}

\begin{proof}
	Algorithm~\ref{a5} finds $\ve{z}$ similar to \cite[Algorithm~7]{wachterzeh2016codes} instead of only finding $v$ value as shown in Algorithm~\ref{tab:a1}. 
	The proof follows a detailed proof in \cite[Theorem~8]{wachterzeh2016codes} for the masking part which replaces the all-ones vector with an $(n-k)\times n$ parity-check matrix. However, Theorem~\ref{thm4} uses $\ve{H}_0$ which is $(n-k_1-r)\times n$.  \\
	Since the error correction is related to $[n, k_1]_q$ code $\mycode{C}_1$ and matrix $\ve{G}_1$, the proof follows Theorem~\ref{thm1} for how to choose $r$ and accordingly correct $t$ errors. 
\end{proof}

Theorem~\ref{thm4} provides an extended construction of Construction~\ref{cons1} to mask any $u\leq n$ partially stuck-at-1 cells and correct $t$ errors.
This gain in the number of partially stuck cells that can be masked comes at the cost of larger redundancy. However, the redundancy is still smaller than the construction for masking and error correction in \cite{heegard1983partitioned} for the same number of (in this case stuck) cells.

\section{Conclusion}

We have proposed several constructions for combined masking of partially stuck-at-1 cells and error correction, by combining the methods proposed in \cite{heegard1983partitioned} for error correction and masking stuck cells, with the masking-only codes for partially stuck cells in \cite{wachterzeh2016codes}.
Compared to \cite{wachterzeh2016codes}, the new code constructions can correct errors in addition to masking. Furthermore, less redundancy is required for masking compared to the code constructions for stuck cells in \cite{heegard1983partitioned}.

The results can be extended to partially stuck-at-$s$ cells for $s>1$ using similar methods as \cite[Section~VII]{wachterzeh2016codes}.
It is also possible to extend the cyclic construction in \cite[Theorem~2]{heegard1983partitioned} using the approach in Theorem~\ref{thm4}, as well as the construction based on binary codes in \cite[Construction~5]{wachterzeh2016codes}.
Furthermore, future work should derive bounds on the required redundancy for a given number of partially stuck cells to mask and number of errors to correct.

\bibliographystyle{IEEEtran}
\bibliography{main}

\section {Appendix}
\emph{Example. Masking and Correcting Partially Stuck Memory Cells - Matrix Form } 
Let  $q = 3$ and $n = 14$ and we want to store the message vector $\ve {m}_1 = (0210210210210)$  $\in \mathbb{F}_q^{n-1}$ or $\ve{m}_2 = (0210210210) \in \mathbb{F}_q^{n-1-r}$ and correct one error $t=1$ in the tenth position. We use a ternary Hamming code with redundancy $r=3$. The partially stuck positions named $\phi_i$ are $\phi_1 =4$ and $\phi_2 = 6$, $\forall i \in u$ and $ u < q$ and $u$ is the number of partially stuck cells. According to Algorithm~\ref{tab:a1}, we need to find the following:\\

	\scalebox{0.9}{$\ve{G} = 	\left[
	\begin{array}{*{1}c} {\ve{G}_1} \\ \ve{G}_0 \end{array}
		\right] =  \begin{bmatrix} \ve{0}_{13 \times 1} & \ve{I}_{13} \\  {1} & {\ve{1}_{1 \times 13}}  \end{bmatrix} $}, and 	$ \ve{P}_{(\textcolor{red}{10}\times \textcolor{red}{3})} = $	\scalebox{.6}{	$	\left[
		\begin{array}{*{3}c}
		
		\textcolor{red}{1}& \textcolor{red}{2}&\textcolor{red}{0}\\
		\textcolor{red}{0}& \textcolor{red}{1}& \textcolor{red}{2}\\
		\textcolor{red}{1}& \textcolor{red}{0}& \textcolor{red}{2}\\
		\textcolor{red}{1} &\textcolor{red}{1} &\textcolor{red}{1}\\
		\textcolor{red}{1} &\textcolor{red}{1}& \textcolor{red}{2}\\
		\textcolor{red}{2}& \textcolor{red}{0}& \textcolor{red}{2}\\
		\textcolor{red}{1} &\textcolor{red}{2} &\textcolor{red}{1}\\
		\textcolor{red}{2} &\textcolor{red}{1} &\textcolor{red}{1}\\
		\textcolor{red}{2} &\textcolor{red}{2} &\textcolor{red}{0}\\
		\textcolor{red}{0} &\textcolor{red}{1} &\textcolor{red}{1}\\

	\end{array}
\right]$}

\begin{enumerate}
		
	\item \emph{Error free} where $r = 0$, find $\ve{w}_1 = \ve {m}_1 \cdot \ve{G}_1$.
	\item \emph{With error $t=1$, find $\ve{w}_2$ },\\
	\scalebox{0.9}{	$\ve{G} = \left[
		\begin{array}{*{1}c} \ve{G_1}\\ \ve{G}_0 \end{array}	
		\right] = \begin{bmatrix} \ve{0}_{10 \times 1} & \ve{I}_{10} & \ve{P}_{10\times 3} \\  {1} & {\ve{1}_{1 \times 10}} & {\ve{1}_{10 \times 3}} \end{bmatrix}$},\\ 
	
	then $\mathbf {w}_2 = \mathbf {m}_2 \cdot \mathbf{G}_1 $.
		
		\item Find $\mathbf{d}$. Since the partially stuck positions are $w_{\phi_1} = 0$ and $w_{\phi_2} = 1$, $v \neq w_{\phi_1} \text{ and }  \neq w_{\phi_2}\text{, then } v = 2$. 	
		Thus, $z_0 = 1$.
		\begin{center}
	\scalebox{1}{	$\ve{d} = z_0 \cdot \ve{G}_0 \rightarrow \ve{d} = 1 \cdot \ve{1}_{1 \times 14}$}
\end{center} 
    \item The codeword vector that can mask only is: \begin{center} $\ve{c} = \ve{w}_1 + \ve{d}$ 
		
	 $\rightarrow \ve{c} = \scalebox{1}{1102\textcolor{red}{1}0\textcolor{red}{2}1021021}$.	\end{center}  
	\item The codeword vector that can mask and correct is:	\begin{center} $\ve{c} = \ve{w}_2 + \ve{d}$
 $\rightarrow \ve{c} = \scalebox{1}{1102\textcolor{red}{1}0\textcolor{red}{2}1021000}$ \end{center} 
	\item \label{it6} Compute $	\ve{s} = \ve{y} \cdot \ve{H}^T$ where $\ve{H}$ is the parity check matrix of $\ve{G}$. If $\ve{s}= \ve{0}$, the retrieved vector $\ve{y}$ is error free,  $z_0 \leftarrow c_0$:	
$ 	\hat {\ve {w}}_1 = \ve{c} -  1 \cdot \mathbf{1}_{1 \times 14}$	
$= \scalebox{1}{\textcolor{orange}{0}0210210210210}$. \\Then, $\hat{\ve{m}}  \leftarrow (\hat{w}_1, \dots, \hat{w}_{13})$ and we get $\hat{\ve{m}} = \ve{m}$.
	\item To decode the retrieved vector $\ve{y}$ with single error $t =1$ at the tenth position, $\ve{y}= \ve{c} + \ve{e} =  \scalebox{1}{1102\textcolor{red}{1}0\textcolor{red}{2}10\textcolor{blue}{0}1000}$.	
Compute $	\ve{s} = \ve{y} \cdot \ve{H}^T= 110$.	
It is ternary hamming code. Comparing $\ve{s}^T$ to the tenth column of $\ve{H}$ is to correct the error.
For unmasking, same as before (repeat Step~\ref{it6}).

\end{enumerate}

\end{document}